\newcommand{\mc}{\mathcal}
\newcommand{\mbb}{\mathbb}
\newcommand{\ra}{\rangle}
\newcommand{\la}{\langle}
\newtheorem{theorem}{Theorem}
\newtheorem{lemma}[theorem]{Lemma}
\newtheorem{cor}[theorem]{Corollary}
\newtheorem{proposition}[theorem]{Proposition}
\begin{document}


\title{Quantum-Assisted Clustering Algorithms for NISQ-Era Devices}

\author{Samuel S. Mendelson}
 \email{Samuel.Mendelson@navy.mil}
\author{Robert W. Strand}%
 \email{Robert.Strand@navy.mil}
\author{Guy B. Oldaker IV}
 \email{Guy.B.Oldaker@navy.mil}
\author{Jacob M. Farinholt}
 \email{Jacob.Farinholt@navy.mil}
\affiliation{%
 Strategic \& Computing Systems Department\\
 Naval Surface Warfare Center, Dahlgren Division\\
 Dahlgren, VA
}%

\date{\today}%

\begin{abstract}
In the NISQ-era of quantum computing, we should not expect to see quantum devices that provide an exponential improvement in runtime for practical problems, due to the lack of error correction and small number of qubits available. Nevertheless, these devices should be able to provide other performance improvements, particularly when combined with existing classical machines.  In this article, we develop several hybrid quantum-classical clustering algorithms that can be employed as subroutines on small, NISQ-era devices. These new hybrid algorithms require a number of qubits that is at most logarithmic in the size of the data, provide performance improvement and/or runtime improvement over their classical counterparts, and do not require a black-box oracle. Consequently, we are able to provide a promising near-term application of NISQ-era devices.
\end{abstract}

\maketitle

\section{\label{Sec:Intro}Introduction}

The noisy, intermediate scale quantum (NISQ) computing regime, a term first coined by Preskill in Ref. \cite{Preskill_2018}, refers to the computing regime we are expected to reside in for the next decade or so. NISQ computing assumes $\sim 50 - 100$ qubits and a universal, albeit imperfect, set of gates. We should not expect to realize practical quantum algorithms in this regime that will have an exponential computational improvement over existing classical algorithms. Nevertheless, it may be possible to construct NISQ-era quantum (or hybrid quantum-classical) algorithms that have some other performance advantage. In particular, there has been significant interest recently in the development of NISQ-era algorithms for machine learning. While the list of publications in this area is rapidly growing \cite{DunjkoEtAl_2018,FarhiEtAl_2018,FarhiEtAl_2017,Kandala2017,Huggins_2019,HavlicekEtAl_2018,SchuldEtAl_2018,CrossEtAl_2015,WilsonEtAl_2018}, well-known examples include the Quantum Approximate Optimization Algorithm (QAOA) \cite{FarhiEtAl_2014,FarhiEtAl_2014-2,FarhiHarrow_2016} and the Variational Quantum Eigensolver (VQE) \cite{Peruzzo_2014,McClean_2016}. For a more extensive review of research in the area of quantum machine learning, we refer the reader to \cite{BiamonteEtAl_2017}.

Quantum algorithms that demonstrate an exponential advantage over classical algorithms often assume fault-tolerance. Because threshold theorems that admit fault-tolerant constructions only exist for circuit-based models, these models are generally pursued for scalable systems. However, because fault-tolerance cannot be assumed in the development of NISQ-era quantum algorithms, it is reasonable to treat adiabatic NISQ algorithms on equal footing as their circuit-based counterparts.

A particular subset of adiabatic quantum computing that has received much attention is one in which the problem Hamiltonian is {\textit{stoquastic}}, that is, a Hamiltonian such that, when expressed in a predetermined, fixed basis, has the property that all of the off-diagonal elements are non-positive.  Generally, we assume that the ground state is then measured in this fixed basis, although some have considered more general scenarios in which the measurement basis can be adaptively varied \cite{Fujii_2018}.  If the measurement basis can be adaptively chosen, then Ref. \cite{Fujii_2018} has shown that stoquastic adiabatic quantum computing can efficiently simulate quantum computing, and is hence universal.  

If we utilize only a fixed measurement basis (hereinafter referred to as the ``computational basis''), then with few exceptions \cite{Hastings_2013,JarretEtAl_2016}, it is believed that this restricted class of computations can be efficiently simulated classically. Nevertheless, stoquastic adiabatic algorithms may be able to demonstrate a significant performance advantage over certain existing classical algorithms, especially in the realm of machine learning, where quality and robustness are often more important performance parameters than run-time.

In what follows, we will view stoquastic adiabatic quantum computing from a graph-theoretic perspective, in much the same way as \cite{Jarret_2018}. In this framework, we design hybrid quantum-classical clustering algorithms that require very few qubits and show a performance increase in quality, run-time improvement, or both over their classical counterparts.  Consequently, we believe that these algorithms will have great utility on NISQ-era adiabatic quantum computing devices.  We note that our approach is uniquely different from most previous adiabatic algorithms in that (1) the quantum subroutine is not the solution to an optimization problem, (2) the final Hamiltonian is not diagonal in the computational basis, and (3) we do not rely on a black box for our problem Hamiltonian, thereby making its integration with classical algorithms more straightforward.

\section{Definitions and Notation}

We begin with a graph $G=(V,E)$ with vertex set $V$ and edge set $E \subset \begin{pmatrix}V\\2\end{pmatrix}$.  In this paper, we consider undirected, weighted graphs with symmetric edge-weight functions.  Let $u\in V$ be a vertex in $G$ and $\omega:E\rightarrow \mathbb{R}^+$ be the edge-weight function.  We define the degree of $u$ as $d_u=\sum_{v\in V} \omega(u,v)$.\footnote{When $\omega(u,v)=1$ for all $(u,v)\in E$, $d_u$ corresponds to the standard definition of degree.}  We define the {\textit{graph Laplacian}} of $G$, $L(G)$, as 
\begin{equation}
L(G)_{uv}=
    \begin{cases}
		d_u & u=v\\
		-\omega_{uv} & u\neq v
	\end{cases}.
\end{equation}
Now let $W:V\rightarrow \mathbb{R}$ be a vertex-weight function.  We write $W$ as a diagonal matrix where $W_{uu}=W(u)$.  If $H$ is a stoquastic Hamiltonian, then we can decompose $H$ as $H=L(G)+W$ for some graph $G$.\footnote{When it is clear from context we omit the dependence of $L$ on $G$.}  If $S\subset V$, we define 
\begin{equation}
|\phi_S\rangle \coloneqq |S|^{-1/2}\sum_{v\in S} |v\rangle
\end{equation}
 to be the equal superposition of all states corresponding to the vertices in $S$.  We let $|\phi\rangle \coloneqq |\phi_V\rangle$.

The graph Laplacian has two important properties that we use throughout the rest of this paper.  If $\lambda_0\leq \lambda_1\leq \dots\leq \lambda_n$ are the eigenvalues of the graph Laplacian $L(G)$, then $\lambda_0=0$ and the algebraic multiplicity of $\lambda_0$ is equal to the number of connected components of $G$.  Second, if $\{G_i\}$ is the set of connected components of $G$, then $\{|\phi_{V_i}\rangle\}$ is an orthonormal basis for the eigenspace of $\lambda_0$, where $V_i$ is the vertex set of $G_i$.  As we consider systems in which the the ground state is degenerate, we denote the spectral gap of $H$ as $\gamma=\lambda_j-\lambda_0$, where $\lambda_j$ is the first eigenvalue strictly greater than $\lambda_0$.

We often consider functions on graphs with the {\textit{Dirichlet boundary condition}}.  If $S\subset V$ and $\delta S$ is the boundary of $S$, then $f:S\cup\delta S\rightarrow\mathbb{R}$ is said to have the Dirichlet boundary condition if $f(v)=0$ for all $v\in \delta S$.  We define Dirichlet eigenvalues of the Laplacian $L$ as 
\begin{equation}
\lambda_i^{(D)}=\min_{\substack{f\perp T_{i-1}\\f|_{\delta S}=0}} \frac{\langle Lf,f\rangle}{\langle f,f\rangle},
\end{equation}
where $T_{j}$ is the space spanned by the functions $f_k$ that attain $\lambda_k$ for $0\leq k\leq j$.  We call these $f_j$ the Dirichlet eigenvectors corresponding to the Dirichlet eigenvalues. For a more detailed overview of Dirichlet eigenvalues and spectral graph theory, we refer the reader to \cite{Chung_1997}.

\section{Grover's Algorithm \label{Sec:GenGrover}}

Adiabatic Grover's algorithm on an $N$-state system consists of an initial and final Hamiltonian and a schedule $s(t)$.  The initial Hamiltonian for Grover's is given by $H_I=\mathbb{I}-|\phi\ra\la\phi|$.\footnote{This is a standard initial Hamiltonian in many adiabatic algorithms.} Grover's final Hamiltonian is given by $H_{Grov}=\mathbb{I}-|m\rangle\langle m|$, where $|m\rangle$ is the state of interest and the ground state of $H_{Grov}$.  The quantum system is initialized as $|\phi\rangle$, the ground state of $H_I$, and the system is evolved according to $H(s)=(1-s)H_I+sH_{Grov}$ where the schedule $s$ is given by
\begin{equation}\label{Eq:GroverSchedule}
s(t)=\frac{1}{2\sqrt{N-1}}\tan\left(2\frac{\sqrt{N-1}}{N}\epsilon t-\arctan\left(\sqrt{N-1}\right)\right)+\frac{1}{2},
\end{equation}
with $s(t_I)=0$, $s(t_F)=1$, and $\epsilon$ determines a lower bound of the inner product between the final state of the evolution $|E_F\ra$ and $|m\rangle$ given by $\left|\langle E_F|m\rangle\right|^2 \geq 1-\epsilon^2$ \cite{AvronEtAl_2010}.  With this schedule, Grover's algorithm scales as $\mathcal{O}(\sqrt{N})$.

Grover's algorithm can be generalized to a system in which there are multiple states of interest.  Suppose in an $N$-state system we are interested in measuring any of the states in the set $M=\{|m_i\rangle\}_i$.  The initial Hamiltonian remains the same as $H_I$ while the final Hamiltonian becomes $H_{Grov}=\mathbb{I}-\frac{1}{|M|}\sum_i |m_i\rangle\langle m_i|$.  The schedule then changes to 
\begin{equation}\label{Eq:GroverRatio}
s(t)=\frac{1}{2\sqrt{r-1}}\tan\left(2\frac{\sqrt{r-1}}{r}\epsilon t-\arctan\left(\sqrt{r-1}\right)\right)+\frac{1}{2},
\end{equation}
where $r=N/|M|$.  With this schedule, the runtime of Grover's is purely a function of the ratio $r$. In particular, the runtime is independent of $N$.

Here, we give a graph-theoretic interpretation of Grover's algorithm on an $N$-state system.  Grover's final Hamiltonian can be decomposed as $H_{Grov}=L(G)+W$ where $G$ is the empty graph on $n$ vertices and $W_{u}=1-\delta_{uv_m}$ where $v_m$ corresponds with the state of interest $|m\rangle$.  We embed $G$ into a larger graph $H$ with one more vertex $v_{N}$.  The vertex $v_N$ is connected to every other vertex except $v_m$.  We can then recover Grover's system by imposing a Dirichlet boundary condition on $v_N$ and considering the Dirichlet eigenvalues and eigenvectors of $H$.  In practice, we can impose the Dirichlet boundary condition on $v_N$ by restricting $L(H)$ to the vertices in $G$.  We can see explicitly then, that this Dirichlet boundary condition returns $H_{Grov}$.\footnote{Alternatively, we can impose the boundary condition by introducing a very large weight on $v_N$.  This forces $v_N$ to be approximately orthogonal to the ground state of $L(H)$.  Intuitively, all vertices adjacent to $v_N$ are affected by this weight and are pushed out of the ground state as well.}  The submatrix obtained by this restriction is called the {\textit{reduced Laplacian}}.  We call the vertices on which we impose the Dirichlet boundary condition {\textit{marked vertices}}.  We use this method of marking vertices to give a generalization of adiabatic Grover's algorithm. See Figure \ref{Fig:GroverGraph} for the graph characterization of traditional Grover's algorithm.

\begin{figure}
\begin{centering}
\includegraphics[scale=0.4]{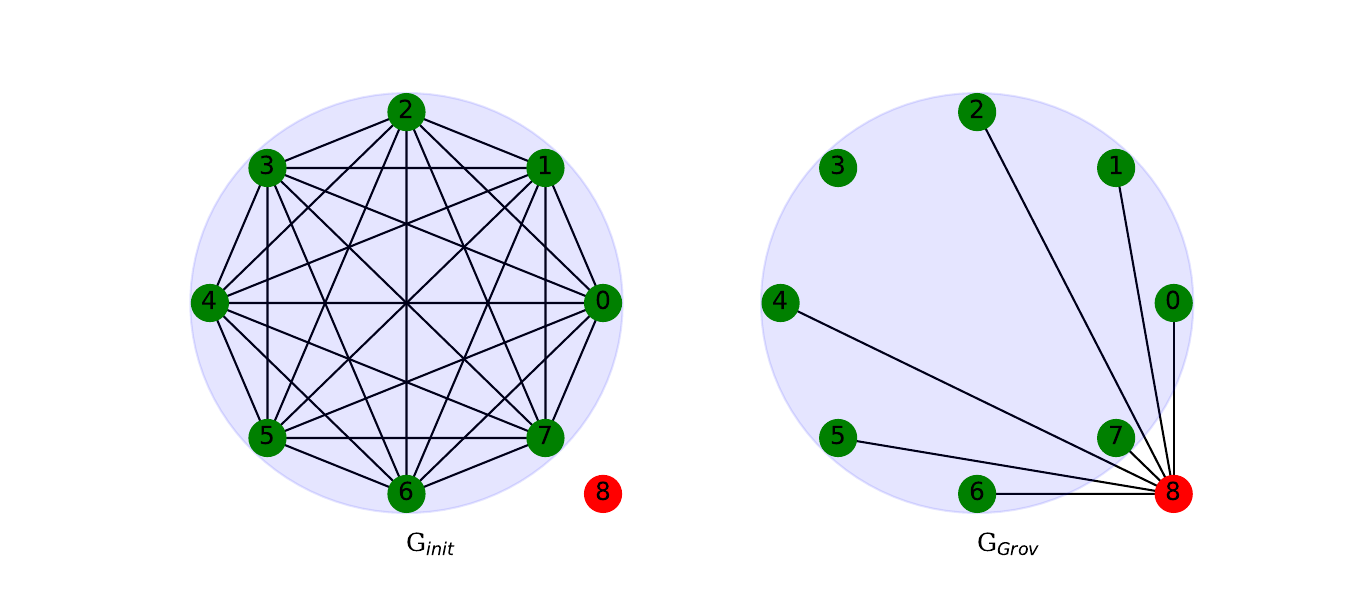}
\end{centering}
\caption{The graphs $G_{init}$ and $G_{Grov}$ corresponding to the initial and final Hamiltonians, respectively, in Grover's search algorithm over an eight element data set, where the desired element is the state $|3\ra$. The vertices in the shaded region correspond to elements of the unsorted database, while the outside red vertex acts as a ``dummy'' vertex. By deleting the row and column corresponding to the dummy vertex in both graph Laplacians, we obtain the initial and final Hamiltonians of adiabatic Grover's algorithm.}%
\label{Fig:GroverGraph}
\end{figure}

\subsection{\label{Subsec:GraphGroverEquiv}Generalization of Adiabatic Grover's}

Before we proceed, we state a simple but important structural lemma for the spectrum of a convex path through the space of Hermitian matrices. The proof is straightforward and is therefore omitted.

\begin{lemma}
Let $H:[0,1]\rightarrow \mbb{H}$ be a path through the vector space, $\mbb{H}$, of $(n \times n)$ Hermitian matrices, such that $H(s)=(1-s)H_0+sH_1$.  Let $A$ be the $m$-dimensional subspace of vectors that are eigenvectors of both $H_0$ and $H_1$ and $B=A^\perp$.
\begin{itemize}
\item[1)] There exists an orthornormal basis $\{f_i\}_{i=0}^{m-1}$ for $A$ such that $f_i$ is an eigenvector of $H(s)$ for all $s\in [0,1]$.
\item[2)] Let $\{\alpha_i\}_{i=0}^{n-m-1}$ be an eigenbasis of $H_0$ for $B$ and $\{\beta_i\}_{i=0}^{n-m-1}$ be an eigenbasis of $H_1$ for $B$.  The remaining eigenvectors and eigenvalues of $H(s)$ (those not in $A$) depend only on $\{\alpha_i\}$, $\{\beta_i\}$, and their corresponding eigenvalues.
\end{itemize}
\end{lemma}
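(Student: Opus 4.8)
The plan is to exploit the fact that $H(0)$ and $H(1)$ commute when restricted to $A$, since every vector in $A$ is simultaneously an eigenvector of both. First I would establish part (1): because $A$ is spanned by common eigenvectors of $H(0)$ and $H(1)$, both operators leave $A$ invariant, and their restrictions $H(0)|_A$ and $H(1)|_A$ are commuting Hermitian operators on $A$. Hence they are simultaneously diagonalizable, which yields an orthonormal basis $\{f_i\}_{i=0}^{m-1}$ of $A$ consisting of common eigenvectors. For such an $f_i$, with $H(0)f_i=\mu_i f_i$ and $H(1)f_i=\nu_i f_i$, linearity of the convex path gives $H(s)f_i=(1-s)\mu_i f_i+s\nu_i f_i=\bigl((1-s)\mu_i+s\nu_i\bigr)f_i$, so $f_i$ is an eigenvector of $H(s)$ for every $s\in[0,1]$.

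Next I would handle part (2) by showing that $B=A^\perp$ is also invariant under $H(s)$ for all $s$. Since $H(0)$ is Hermitian and leaves $A$ invariant, it leaves $A^\perp=B$ invariant; likewise $H(1)$ leaves $B$ invariant; hence so does the convex combination $H(s)$. Therefore the spectral decomposition of $H(s)$ splits as an orthogonal direct sum of the action on $A$ (handled in part (1), depending only on the $\mu_i,\nu_i$) and the action on $B$. The restriction $H(s)|_B=(1-s)H(0)|_B+sH(1)|_B$ is itself a convex path of Hermitian operators on $B$, determined entirely by the two endpoint operators $H(0)|_B$ and $H(1)|_B$, which in turn are specified by the eigenbases $\{\alpha_i\}$, $\{\beta_i\}$ and their eigenvalues. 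Thus the remaining eigenvectors and eigenvalues of $H(s)$ — those lying in $B$ — depend only on this data, as claimed.

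The only subtlety worth flagging is the passage from ``$A$ is spanned by vectors that are eigenvectors of both $H(0)$ and $H(1)$'' to ``$H(0)|_A$ and $H(1)|_A$ commute and are simultaneously diagonalizable.'' One must be a little careful if $A$ is defined merely as the span of some set of common eigenvectors rather than as a sum of joint eigenspaces; but since each spanning vector is an eigenvector of both operators, $A$ decomposes into joint eigenspaces of the pair, which makes the simultaneous diagonalization immediate and also confirms that $H(0)$ and $H(1)$ genuinely map $A$ into itself. Everything else is routine linear algebra — invariance of orthogonal complements under Hermitian operators, and linearity of $s\mapsto H(s)$ — so I would simply record these steps without belaboring the computations, which is presumably why the authors chose to omit the proof.
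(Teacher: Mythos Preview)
Your proof is correct and is precisely the kind of straightforward linear-algebra argument the authors had in mind; the paper explicitly omits the proof, stating only that it is ``straightforward,'' so there is nothing to compare against beyond noting that your invariance-of-$A$-and-$B$ decomposition is the natural route. The subtlety you flag about the definition of $A$ is real but you resolve it correctly.
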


This lemma tells us that the change in the spectral gap in the convex evolution of Hermitian matrices is uniquely determined by the eigenvectors in which the initial and final Hamiltonians differ. Applying this lemma to adiabatic Grover's algorithm gives us the following corollary.

\begin{cor}
Let $M\subset [N]$ and $B=[N]\backslash M$ where $M$ is a collection of states of interest.  The non-constant spectrum and corresponding eigenvectors of Grover's algorithm depend only on the eigenvectors and eigenvalues $\left(f_0=|\phi_M\rangle,\phi_0=0\right)$, $\left(f_1=|\phi_B\rangle,\phi_1=0\right)$, $\left(e_0=|\phi\rangle,\lambda_0=0\right)$, and $(e_1=\mc{N}(|B||\phi_M\rangle-|M||\phi_B\rangle),\lambda_0=1)$, where $\mc{N}$ is a normalization coefficient.
\end{cor}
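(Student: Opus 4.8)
\emph{Proof strategy.} The plan is to apply the preceding Lemma directly, with $H(0)=H_I=\mathbb{I}-|\phi\rangle\langle\phi|$ and $H(1)=H_{Grov}$; the only work is to identify the subspace $A$ of common eigenvectors of the two endpoints together with $B=A^\perp$, and then to write down the eigenbases of $H_I$ and $H_{Grov}$ on $B$. First I would record the relevant eigenspaces: $H_I$ has the one-dimensional ground space $\mathrm{span}\{|\phi\rangle\}$ with eigenvalue $0$ and the hyperplane $|\phi\rangle^\perp$ with eigenvalue $1$, while $H_{Grov}$ acts as a scalar on each of the coordinate subspaces $\mathrm{span}\{|k\rangle : k\in M\}$ and $\mathrm{span}\{|k\rangle : k\in B\}$ of the partition $[n]=M\sqcup B$, so that in particular $|\phi_M\rangle$ and $|\phi_B\rangle$ are eigenvectors of $H_{Grov}$.

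Next I would pin down $A$ (assuming, as the search problem requires, that $M$ and $B$ are both nonempty). A vector is an eigenvector of both $H_I$ and $H_{Grov}$ exactly when it lies in an eigenspace of the one intersected with an eigenspace of the other; since $\langle\phi|k\rangle$ is independent of $k$, intersecting either coordinate block with $|\phi\rangle^\perp$ imposes only the single linear constraint $\sum_k v_k=0$, and $\mathrm{span}\{|\phi\rangle\}$ meets neither block. Hence $A=A_M\oplus A_B$, an orthogonal direct sum, with $A_M=\{v : \operatorname{supp}v\subseteq M,\ \sum_k v_k=0\}$ and $A_B$ defined analogously; in particular $\dim A=(|M|-1)+(|B|-1)=n-2$. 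The point to watch is that on $A$ the two endpoints share eigenvectors but not eigenvalues --- $H_I$ is the identity on $A$, whereas $H_{Grov}$ is one scalar on $A_M$ and a different one on $A_B$ --- which is precisely the generality in which the Lemma is stated.

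Finally I would read off the data the Lemma requires. Since $|\phi_M\rangle$ and $|\phi_B\rangle$ are unit vectors of constant weight on disjoint supports, they are orthonormal and each is orthogonal to all of $A$, so $B=A^\perp=\mathrm{span}\{|\phi_M\rangle,|\phi_B\rangle\}$; on this plane $H_{Grov}$ is already diagonalized by $f_0=|\phi_M\rangle$ and $f_1=|\phi_B\rangle$. Because $|\phi\rangle$ is a positive combination of $|\phi_M\rangle$ and $|\phi_B\rangle$ it lies in $A^\perp$, so the eigenbasis of $H_I$ on $A^\perp$ consists of $e_0=|\phi\rangle$ and the unit vector $e_1\in A^\perp$ orthogonal to $|\phi\rangle$ (this is the $\mathcal{N}(|B||\phi_M\rangle-|M||\phi_B\rangle)$ of the statement), and a short computation supplies the accompanying eigenvalues. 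The second part of the Lemma then says that the eigenvalues and eigenvectors of $H(s)$ not lying in $A$, i.e.\ the non-constant ones, are determined by $\{e_0,e_1\}$, $\{f_0,f_1\}$ and their eigenvalues alone, while the first part accounts for the remaining $n-2$ eigenvectors through a fixed orthonormal basis of $A$ whose eigenvalues vary linearly in $s$; together these are exactly the corollary. I expect the only real obstacle to be the middle step: one has to verify carefully that $A$ is exactly this $(n-2)$-dimensional space, i.e.\ that no common eigenvector of $H_I$ and $H_{Grov}$ has been missed --- once the endpoint eigenspaces are intersected correctly, the rest is bookkeeping.
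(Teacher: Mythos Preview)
Your proposal is correct and follows the same approach as the paper: apply the Lemma, identify the $(n-2)$-dimensional common eigenspace $A$, and read off the two-dimensional complement $A^\perp$ together with its respective eigenbases for $H_I$ and $H_{Grov}$. Your treatment is in fact more explicit than the paper's---you describe $A$ concretely as $A_M\oplus A_B$ and flag the subtlety that the endpoints may carry different eigenvalues on these pieces---whereas the paper simply intersects the two $(n-1)$-dimensional excited hyperplanes and counts dimensions.
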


\begin{proof}
The ground state of $H_I$, is $|\phi\rangle$ and has a ground energy of 0.  The ground state of $H_{Grov}$ is $|\phi_M\rangle$ and also has ground energy 0.  The first excited states of both are $(N-1)$-degenerate with a first-excited energy of 1.  The intersection of subspaces of the first-excited states is an $(N-2)$-dimensional subspace, and so $H_I$ and $H_{Grov}$ share an $(N-2)$-dimensional eigenspace.  The subspace on which they differ has an eigenbasis given by $e_0$ and $e_1$ for $H_I$ and an eigenbasis of $f_0$ and $f_1$ for $H_{Grov}$ as given above.  Therefore, by the above lemma, the spectrum and eigenspace of $H(s) = (1-s)H_I + sH_{Grov}$ is determined only by $(e_0,\lambda_0)$, $(e_1,\lambda_1)$, $(f_0,\phi_0)$, $(f_1,\phi_1)$.
\end{proof}

Using the above lemma and corollary, we give a graph-based generalization of adiabatic Grover's.  Suppose now we have a graph of $k$ connected components $\{G_i\}_{i=0}^{k-1}$ with $|V|=N$ and we are interested in finding a vertex in the $m$-th component, $G_m$ with $|V_m|=N_m$. We choose a set of vertices $V_d$ such that every vertex not in $G_m$ is adjacent to exactly one vertex in $V_d$.  We mark (impose the Dirichlet boundary condition on) the vertices in $V_d$ and use the corresponding reduced Laplacian as our final Hamiltonian $H_F$ (see Figure \ref{Fig:GenGroverGraph} for an example generalized Grover graph).  The initial Hamiltonian $H_I$ has all eigenvectors of $H_F$ except $f_0=|\phi_{G_m}\rangle$ and $f_1=|\phi_{\{G_i\}_{i\neq m}}\rangle$.  The corresponding eigenvalues are $\lambda_0=0$ and $\lambda_1=1$ respectively.  Therefore, while the spectra and eigenspaces of $H_F$ and $H_{Grov}$ are very different, by Lemma 1, the gap and ground state of this algorithm are equivalent to that of Grover's.

\begin{figure}%
	\subfloat[$G_{init}$]{
	\includegraphics[width=2in]{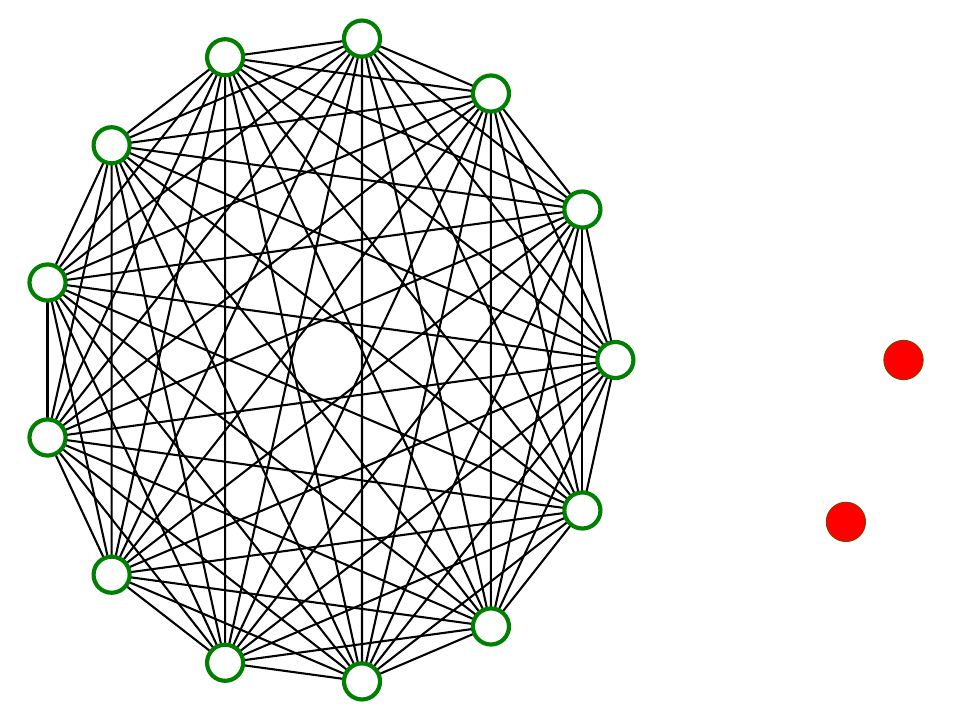}
	}

	\subfloat[$G_{Grov}$]{
	\includegraphics[width=2in]{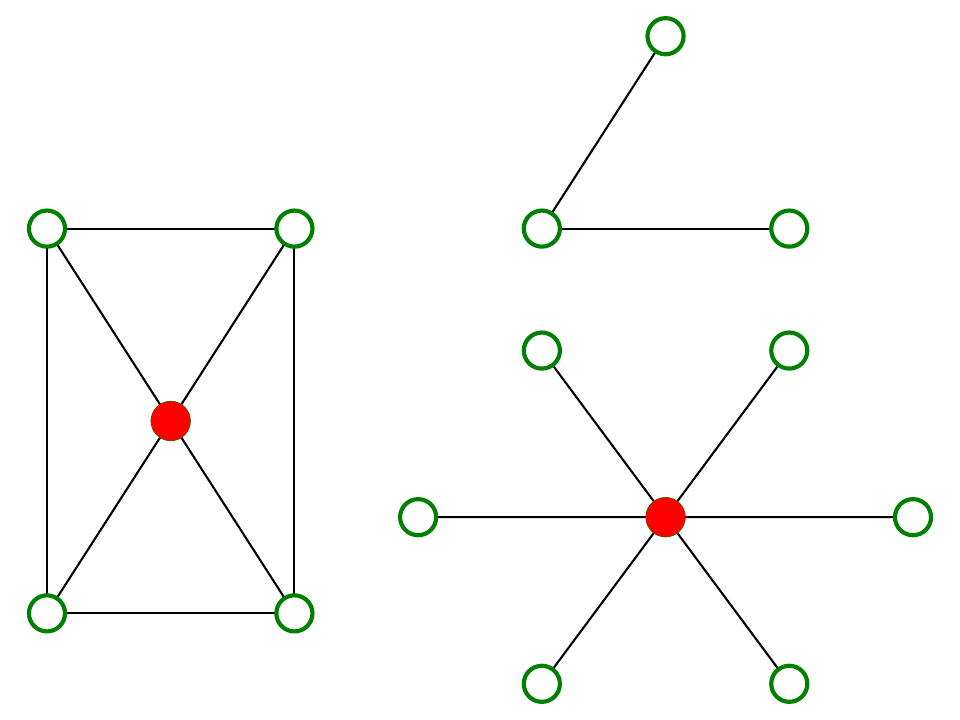}
	}
\caption{The graphs for the initial (a) and final (b) Hamiltonians of a generalized adiabatic Grover's algorithm. The red (filled in) vertices are the ``marked'' vertices that are removed to construct the reduced Laplacians that serve as the initial and final Hamiltonians. The ground state solution corresponds to the component in $G_{Grov}$ that has no marked vertex.}%
\label{Fig:GenGroverGraph}
\end{figure}

We see, then, that there are a large number of problem Hamiltonians that operate identically to the traditional adiabatic Grover's algorithm when the same initial Hamiltonian is used.

\subsection{Relaxation of Adiabatic Grover's Algorithm\label{Subsec:relax}}

While we have generalized the class of problem Hamiltonians for adiabatic Grover's algorithm, the task of finding the set $V_d$ and subsequently constructing $H_F$ is at least as difficult as constructing the traditional adiabatic Grover's oracle.  However, unlike Grover's algorithm, its generalization lends itself to relaxation.  We begin by relaxing the condition that every vertex in a connected component be adjacent to exactly one marked vertex.  While the evolution time $T$ for Grover's algorithm depends only on the ratio $N/|M|$, where $N$ is the total number of vertices and $|M|$ is the number of vertices that are not in a marked component, we do not expect this to be the case in general.  We begin with the worst case and consider the graph where half of the vertices are in a path, with an end vertex marked, and the other half are isolated (the configuration of the vertices in an unmarked component does not affect the evolution of the system, so without loss of generality, we may leave them isolated).  Since the path has a marked vertex, given a large enough evolution time $T$, the amplitudes of the path vertices tend to 0.  In Figure \ref{Fig:PathSims},  we plot the probability of measuring a vertex outside of the path against the total evolution time $T$ for both the optimal Grover's schedule of \eqref{Eq:GroverRatio} using $r=2$, and a constant schedule.  The value $N$ is the total number of vertices prior to marking. 
 We see that not only does the scaling depend on $N$, but that for a given $T$, the constant schedule is more appropriate than the optimal Grover's schedule.

\begin{figure}
    \centering
    \includegraphics[width=3.5in]{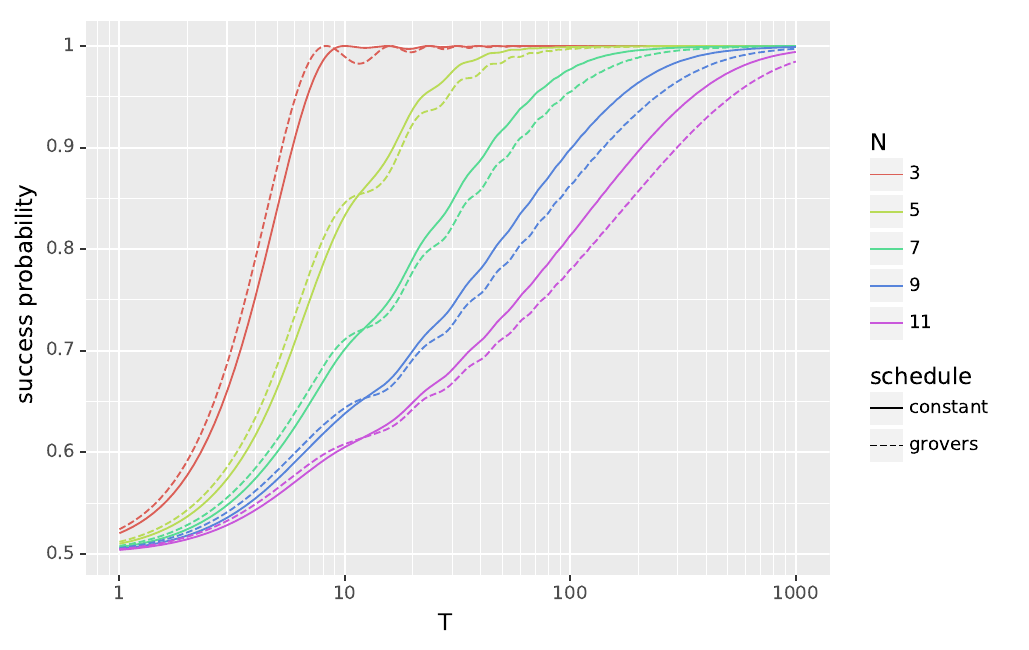}
    \caption{Performance of the Path example as we increase the total adiabatic evolution time $T$. We see that there is clearly a dependency on $N$ in this worst-case scenario.}
    \label{Fig:PathSims}
\end{figure}

We have shown, at least, that the generalized Grover's cannot be fully relaxed with regards to the connectivity of connected components. However, for certain applications, namely graph clustering, we do not expect to see such poor connectivity.  The task of graph-clustering involves partitioning a graph into pieces called \emph{clusters} with high connectivity such that each cluster has low connectivity with the other clusters.  We say that the intra-cluster connectivity should be high and the inter-cluster connectivity should be low. To this end, we look to find a representative of each cluster, and using this representative, partition the rest of the set. This scenario is a relaxation of our generalization in two ways.  First, for a given clustering problem, each vertex in a marked connected-component will not necessarily be adjacent to a marked vertex.  And second, more importantly, the components will not necessarily be disconnected from each other.  However, there is still useful information to be found in the ground state of $H_F$.

To better understand the ground state of $H_F$, it is helpful to view the ground state of $H_F$ in the context of random walks. The Dirichlet boundary condition is equivalent to imposing an absorbing state in the random walk on a graph $G$ \cite{JarretLackey_2017}. Given a distribution of walkers on $G$ with this absorbing state, the ground state of $H_F$ describes the quasi-stationary distribution of the graph walkers.  Walkers have a higher concentration on vertices from which it is difficult to reach the absorbing state.  When a graph is disconnected, this means the walkers concentrate on components not connected to the absorbing state.  When the graph is connected, walkers are concentrated on vertices not well connected to the absorbing state.  In the clustering case, vertices in clusters not containing a marked vertex have higher amplitudes than those that do contain a mark.  in Ref. \cite{JarretLackey_2017}, the authors use this quasi-stationary distribution to drive an optimization algorithm.  In our case, the quasi-stationary distribution is the end goal, as it provides information on cluster membership. This will be shown in more detail in Section \ref{Sec:QUAC}.

In the idealized (i.e. generalized Grover's) case, the inter-cluster connectivity is zero, and the intra-cluster connectivity is complete. If additionally each of the $k$ components is roughly the same size, then for large $N$, we may set $r = k$ in the schedule \eqref{Eq:GroverRatio}. As stated before, Grover's algorithm then has constant runtime for a fixed $k$, and is independent of $N$.  In the Appendix, we analyze a very weak relaxation of Grover's by adding an edge between two complete graphs on $N$ vertices with one marked vertex.  We see that in the limit as $N$ goes to infinity, the spectral gap converges to that of Grover's algorithm.

\begin{figure}
 \begin{centering}
  \includegraphics[width=0.4\textwidth]{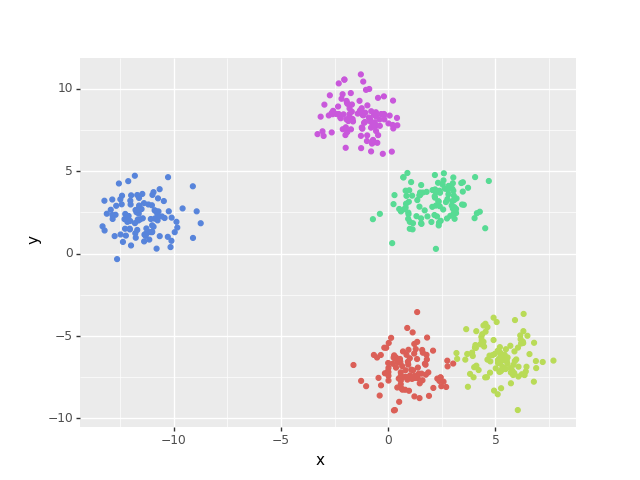}
 \end{centering}
 \caption{The Five Cluster dataset, correctly partitioned.}%
 \label{Fig:5Cluster}
\end{figure}

Figure \ref{Fig:5Cluster} provides an example of a dataset with five clusters. When this data is mapped to a graph (we discuss in the following section how we create graphs from vector-valued data), there is nonzero inter-connectivity between several of the clusters and provides an example of a weak relaxation of generalized Grover's. As demonstrated in Figure \ref{Fig:MarkSims}, we see empirically that there is only a weak dependence on $N$.  We expect this dependence to remain minimal for datasets that can be reasonably clustered.  As the structure of the data becomes less cluster-like (for example, in the connected path scenario), this size-independence property starts to break down.  In what follows, we develop a \emph{Quantum Cluster Initializer}, or \emph{QCI} algorithm, the goal of which is to find representatives from each cluster.

\begin{figure}
    \centering
    \includegraphics[width=3.5in]{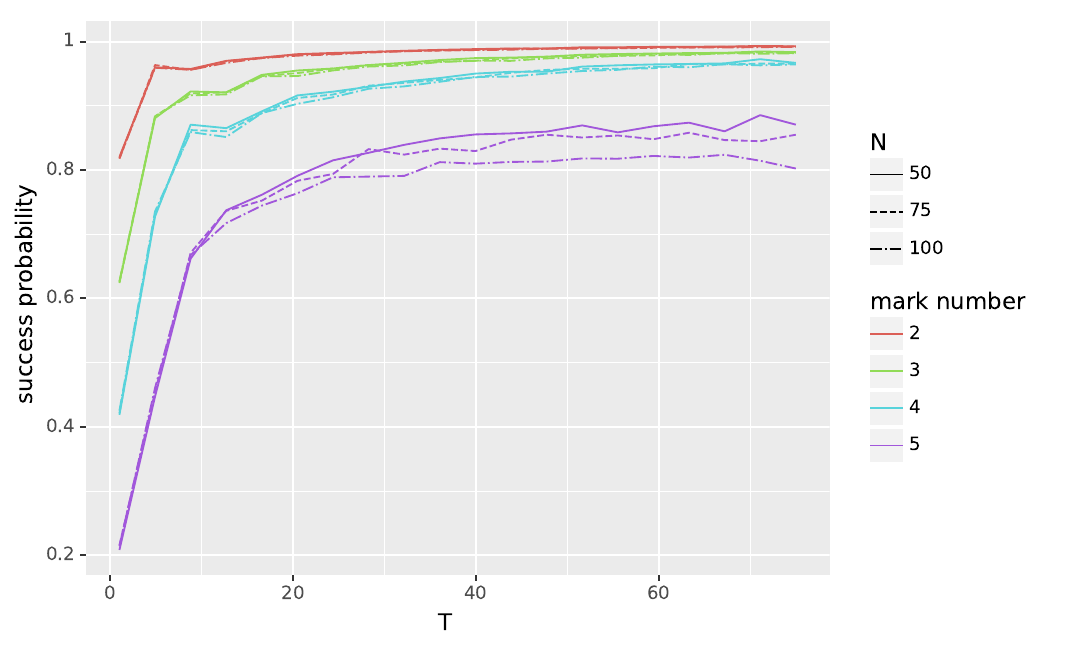}
    \caption{Performance of QCI on the Five Cluster dataset as a function of the overall runtime $T$. The colors correspond to the number of marked vertices. Uniform random subsets of the dataset of various sizes $N$ were chosen to run QCI over.}
    \label{Fig:MarkSims}
\end{figure}

The quantum subroutine itself is rather straightforward. Given a graph Laplacian and a set of vertices to ``mark,'' the quantum computer builds its problem Hamiltonian from the corresponding reduced Laplacian.\footnote{For the purposes of this algorithm, ``marking'' a vertex corresponds to deleting the corresponding row and column in the graph Laplacian.} We then adiabatically evolve our system accordingly, and measure the final ground state in the computational basis. If the graph has multiple components, then the ground state immediately prior to final measurement is a superposition of basis states representing vertices from unmarked connected components. Consequently, measuring the final state gives a representative from a different connected component, that is, a different cluster, with certainty. If, however, the graph is connected, but can be grouped into clusters with high intra-cluster connectivity and low inter-cluster connectivity, then the ground state of the problem Hamiltonian is a superposition of all states. The states with the largest amplitudes are those from unmarked clusters. Consequently, measuring the ground state has a high probability of collapsing it onto a representative of a previously unmarked cluster. 

\begin{figure}
 \centering
 \includegraphics[width=0.5\textwidth]{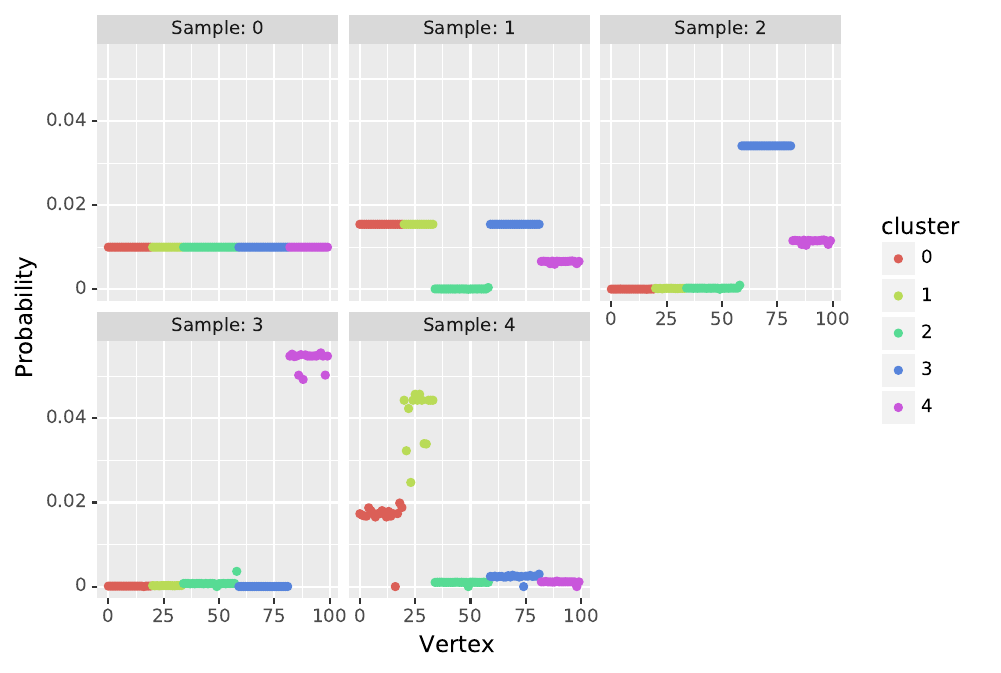}
 \caption{An example of the change in probability amplitudes associated with each vertex in the ground state solution as we run QCI to find (and consequently mark) a representative of each cluster in the Five Cluster dataset.}%
 \label{Fig:SampleProbs_5C}
\end{figure}

In Figure \ref{Fig:SampleProbs_5C} we provide an example of how the probability amplitudes of the vertices in the ground state solution change with each iteration of QCI run over a subset of a five cluster dataset from Figure \ref{Fig:5Cluster}. 
From Figures \ref{Fig:MarkSims} and \ref{Fig:SampleProbs_5C}, we see even with modest interconnectivity between some clusters, the QCI algorithm finds representatives of unmarked clusters.  We provide the pseudocode for QCI in Algorithm $\ref{alg:qci}$.


\begin{algorithm}[H]
    \caption{Samples vertex based on Dirichlet boundary conditions}
    \algsetup{indent=2em}
 \begin{algorithmic}[1]   

    \STATE \textbf{function QCI} $(L , marks)$\\
    \textbf{Input:} Laplacian $L$, set of marked vertices $marks$\\
    \textbf{Output:} $v_{meas}$
    \begin{ALC@g}
    \STATE $H_I\Leftarrow\mathbb{I}-|\phi\rangle\langle\phi|$\\
    \STATE $|\psi(0)\rangle\Leftarrow |\phi\rangle$\\
    \STATE $H_F \Leftarrow$ reduced Laplacian $L$ with marked vertices $marks$\\
    \STATE Evolve system\\
    \STATE $v_{meas}\Leftarrow$ Measurement
    \end{ALC@g}
    \RETURN $v_{meas}$

 \end{algorithmic}
 \label{alg:qci}
\end{algorithm}

In what follows, we assume that our data should be clustered;  that is, our data can be partitioned into highly intra-connected components with low inter-connectivity.

\section{QUAC\label{Sec:QUAC}}
In this section, we utilize our QCI algorithm in two scenarios.  First, as a seed initialization for an existing classical algorithm, $k$-means; and second, as the basis for a new unsupervised nearest-neighbor clustering algorithm.  We show that our $k$-means initialization performs just as well as the current standard, $k$++, on well-behaved datasets and outperforms it on others.  When compared to the current graph clustering standard, spectral clustering, our nearest-neighbor algorithm has comparable performance across the board, but with a polynomial improvement in runtime. Moreover, unlike spectral clustering, our algorithm does not require the graph to be connected. These quantum-assisted clustering algorithms, or \emph{QUACs}, are a hybrid between quantum computation and existing classical algorithms, and should serve as a realizable application of NISQ-era devices.

\subsection{Numerical Simulation Setup\label{Sec:NumSim}}

We consider the full performance of the hybrid algorithms and compare them against the purely classical counterparts. To simulate the QCI subroutine, we use an eighth-order Runge-Kutta method implemented in Python on the time-dependent Schr{\"o}dinger's equation:
\begin{equation}
i\hbar\frac{d}{dt}|\phi(t)\rangle= H(t)|\phi(t)\rangle,
\end{equation}  
where for simplicity, we set $\hbar = 1$. We follow Grover's schedule given in Equation \eqref{Eq:GroverRatio}, setting $r$ as the number of clusters $k$, based on the theoretical analysis and empirical evidence from the previous section.  The runtime $T$ is computed using an $\epsilon$ value of 0.01. We test our algorithms against several different test datasets we generated. Since QCI finds representatives of clusters in graphs, we must first convert our vector-valued datasets into graphs.  There are many methods for constructing graphs from vector-valued data; including, but not limited to, $\varepsilon$-ball methods, nearest-neighbor methods, and kernel methods.  There is a large collection of literature exploring which methods to build graphs from different kinds of data for different applications.  This discussion is outside the scope of this paper and so we refer interested readers to  Ref. \cite{DaitchEtAl_2009} and references therein for a more detailed introduction to this area.

In what follows, we use the $\varepsilon$-ball method to create our graphs. Each point in the dataset will be represented by a vertex. For a prescribed $\varepsilon$, two vertices will have an edge between them if their corresponding points have a Euclidean distance less than $\varepsilon$.  This type of graph construction does not create weighted edges. After the graph is constructed, outlier removal is performed, with any isolated vertices removed. 

Because we are assuming our quantum computer has only a small number of qubits and/or limited connectivity, we run the quantum subroutine on only a subset of the data (chosen uniformly at random), which we call the \emph{thinned} dataset.  In all cases, the size of the thinned set is $10\%$ of the full data.
After our quantum initialization is complete, we run the classical routine over the entire dataset, with the quantum results as the initialization.  The sampling rate for the evolved states is denoted $m$ at the beginning of each algorithm.  We discuss $m$ for each of our algorithms in their corresponding sections.

\subsection{$q$-means}
The $k$-means algorithm \cite{Lloyd_1982} is a standard clustering algorithm that is simple but effective.  Given a set of vector-valued data and a user-defined value $k$, the $k$-means algorithm partitions the data into $k$ clusters. To begin, a set of $k$ seeds is chosen at random from the data.  Next, the distance between each data point and each seed is calculated.  By associating each point with the closest seed, the data is partitioned into $k$ sets.  The centroid of each cluster is calculated as the average of the points in each cluster.  Using these centroids, the data is partitioned again with the centroids as the new seeds.  This process is repeated until either there is no update to the centroids or a maximum number of iterations is reached.

The goal of this algorithm is to minimize the inertia, that is, the sum of square distances to the centroids of each cluster. However, it is very sensitive to the initial choice of seeds and often finds a local minimum of inertia. In addition, if the initial seeds are poorly chosen, the algorithm may require a large number of iterations before convergence.

To combat these problems, a ``smarter'' seeding algorithm has been developed \cite{Arthur_2007}.  Known as $k$++, this seeding algorithm is based on the idea that cluster centers should be spaced far apart.  First, a point is chosen at random to be the first seed.  Next, the distance between the first seed and every other point is calculated.  Using the normalized distances as a probability distribution, the next point is chosen.  This process is repeated until all $k$ seeds are chosen.  The rest of the $k$-means algorithm remains unchanged.  While there is a higher computation cost up-front, it has been shown that this initialization reduces the number of iterations required and is an overall reduction in computation \cite{Arthur_2007}.  Empirically, the solutions tend to have better inertia as well.  Even with these improvements, $k$++ still produces poor results if the clusters are not spherical Gaussians with similar variation and density.  As we will see, the solution that minimizes the inertia is not always the ``correct'' solution.  We will show how to use our QCI subroutine to generate seeds for the $k$-means algorithm and show that our seeds give a performance on par with $k$++ for well-behaved data and a demonstrable performance improvement for a certain class of data.

We note that an adiabatic initialization algorithm for $k$-means was developed previously in Ref. \cite{LloydEtAl_2013}; however, the amplitudes of the ground state in that algorithm reproduce the probability amplitudes in the $k$++ algorithm, and consequently its ability to accurately cluster is equivalent to that of $k$++. Our ``$q$-means'' algorithm, on the other hand, uses QCI as a seed initialization for $k$-means.

We begin our initialization with a dataset and user-defined values $k$ and $m$ for the number of clusters and sampling rate respectively. We first build a graph and the corresponding Laplacian for the data as described in section \ref{Sec:NumSim}.  Once we construct the Laplacian, we run QCI $k$ times to obtain, ideally, a representative of each cluster.  While these points are representatives of the clusters, they are not necessarily good representatives of the cluster centers.  For a good approximation of the cluster center of the first cluster, we remove the mark associated with the first cluster, leaving the other marks in place, and run QCI $m$ times.  QCI returns different representatives of the first cluster after each run.  We average the data points associated with these vertices and use this for our initial centroid for the first cluster.  We repeat this process for each of the $k$ clusters.  The pseudocode for $q$-means is provided in Algorithm $\ref{alg:qmeans}$.


\begin{algorithm}[H]
    \caption{Returns seed initialization for $k$-means}
    \algsetup{indent=2em}
 \begin{algorithmic}[1]
    
    \STATE \textbf{function $q$-means} $(X,L, k, m)$\\
    \textbf{Input:} data set $X$, Laplacian of data $L$, number of clusters $k$, number of measurements for each cluster $m$\\
    \textbf{Output:} seed initializations $\{c_n\}_{n=1}^k$\\
    \begin{ALC@g}
    \STATE $v_1\Leftarrow$ vertex chosen uniformly at random\\
    \FOR{$i\in [k-1]$}
        \STATE $v_{i+1}\Leftarrow \mbox{QCI}\left(L,\{v_n\}_{n=1}^{i}\right)$
    \ENDFOR \newline\\
    \FOR{$i \in [k]$}
        \FOR{$j \in [m]$}
            \STATE $u_j \Leftarrow \mbox{QCI}\left(L,\{v_n\}_{n\neq i}\right)$
        \ENDFOR
        \STATE $c_i\Leftarrow \frac{1}{m}\sum_n x_{u_n}$
    \ENDFOR
    \end{ALC@g}
    \RETURN $\{c_n\}$
 \end{algorithmic}
 \label{alg:qmeans}
\end{algorithm}

To determine the complexity of $q$-means, we need to consider both the complexity of the quantum subroutine as well as the complexity of classical computation.  We call QCI $k$ times to find the initial cluster representatives, and then $k\times m$ times to compute the seeds.  With our choices of schedule, QCI is constant with respect to the problem size $N$.  Therefore, the quantum subroutine complexity depends only on $m$. For a given  probability distribution $P$ over a finite dataset $X$, the error of a finite approximation of the mean of $X$ over $P$ depends only on the standard deviation of $P$ over $X$ and the sampling rate $m$.  Therefore, assuming $X$ is bounded, the error on the approximation of the mean is independent of the cardinality of $X$.  Therefore, our $q$-means initialization is $\mathcal{O}(1)$ with respect to $N$.  The only computation cost we need to consider is the classical construction of the graph Laplacian from the data. The runtime of this construction depends on the method used to create the graph.  For a kernel-based method, the construction is $\mathcal{O}(N^2)$.  For a nearest-neighbor-based method, the construction is $\mathcal{O}(N\log^2 N)$. This is the limiting factor in the runtime of our $q$-means initialization, whereas $k$++ has runtime $\mathcal{O}(N)$. We show a performance improvement as well as a slight improvement in the number of iterations needed to converge for our synthetic data.

We simulate $q$-means on two different datasets.  The first is a set of points chosen from five circular Gaussian distributions, which we denote as the \emph{Five Cluster} dataset, first presented in Section \ref{Subsec:relax} (see Fig. \ref{Fig:5Cluster}).  The $k$++ initialization performs extremely well on this type of data.  The next dataset is a set of points chosen from two Gaussians; one of which is circular and one of which is elliptical (the variance in the $x$-direction is twice the variance in the $y$-direction) which we call the \emph{Elliptical} dataset (see Fig. \ref{Fig:Elliptical}).  This dataset is particularly difficult for $k$++ (and the random initialization) to cluster ``successfully.''  By success, we mean the algorithm partitioned the dataset according to those clusters we visually distinguish.  The inertia of the ``correct'' clustering of the elliptical dataset does not achieve the global minimum of the inertia.  Because the $k$-means algorithm optimizes over inertia, a successful $k$-means produces a solution with the minimum inertia.  We therefore measure our clustering solutions with the \emph{determinant criterion}, which is the determinant of the within-cluster scatter matrix.\footnote{Inertia is the trace of the within-cluster scatter matrix.}  As we see in Table \ref{Tab:EllipticalPerformance}, while the inertias of the three solutions are similar, the determinant criterion for the correct solution is an order of magnitude smaller.

\begin{figure}
	\subfloat[ ]{
	\includegraphics[width=3.5in]{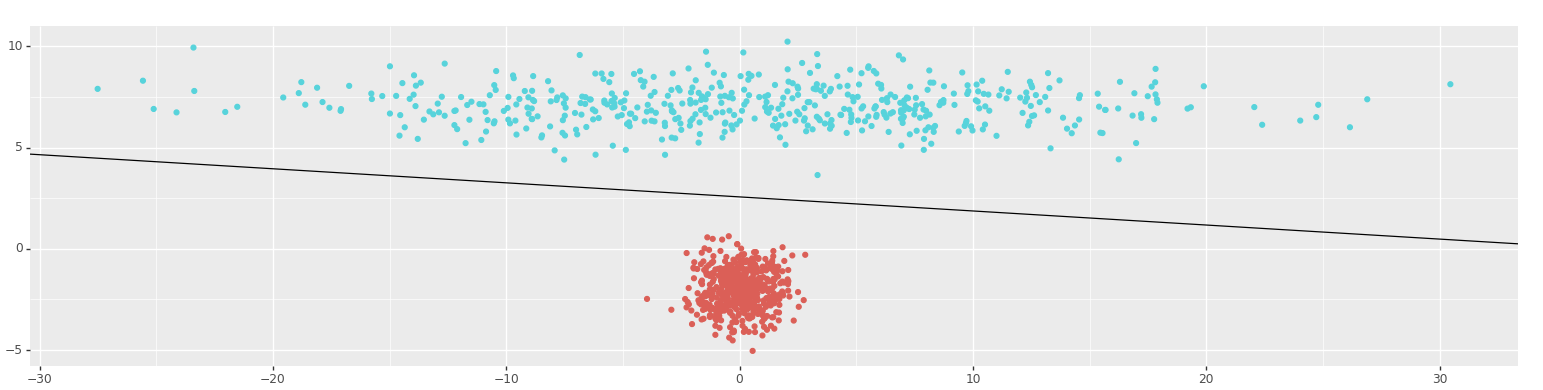}
	}

	\subfloat[ ]{
	\includegraphics[width=3.5in]{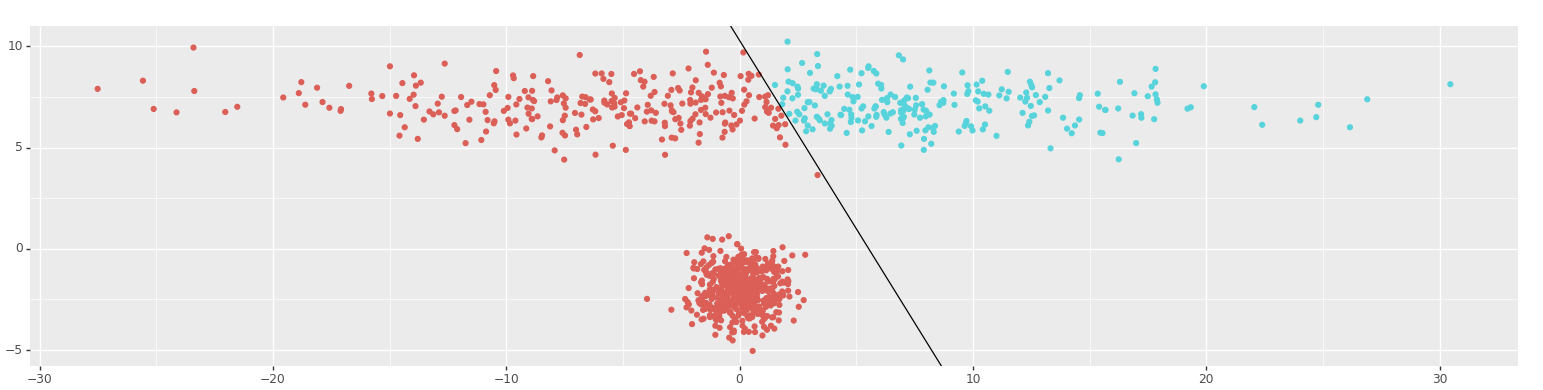}
	}
\caption{The Elliptical dataset clustered (a) correctly, and (b) incorrectly. The second incorrect partition identified in Table \ref{Tab:EllipticalPerformance} is the mirror image of the partition of (b).}%
\label{Fig:Elliptical}
\end{figure}

\begin{table}
\begin{center}
    \begin{tabular}{|c|c|c|}
    \hline
    & \textbf{Inertia}  & $\mathbf{|S_W|}$\\
    \hline
    \textbf{Correct} & \ \ 47,194.898 \ \ & \ \ 44,519,995.441 \ \ \\
    \hline
    \textbf{Incorrect 1} & \ \ 39,503.416 \ \ & \ \ 257,968,951.423 \ \ \\
    \hline 
    \textbf{Incorrect 2} & \ \ 42,489.978 \ \ & \ \ 278,934,320.164 \ \ \\
    \hline
    \end{tabular}
\end{center}%
\caption{Comparison between the inertia and determinant criterion for the three solutions of $k$-means on the elliptical data set, given an optimal choice of $\varepsilon$.}\label{Tab:EllipticalPerformance}%
\end{table}

Because we have used the $\varepsilon$-ball method to map the data to a graph, we show how the choice of $\varepsilon$ affects the performance of $q$-means.  As we can see in Figure \ref{Fig:Epsilons}, there is a wide range of $\varepsilon$'s for which $q$-means has a higher probability of success over both $k$-means and $k$++ for the elliptical dataset. Conversely, on the five cluster dataset, there is a very small range in the choice of $\varepsilon$ that returns a similar success probability as $k$++. Consequently, it is clear that a good estimate for $\varepsilon$ is important, especially on highly idealized cluster data. The reason for wide variance in performance of $q$-means as we vary $\varepsilon$ is that, when $\varepsilon$ is too small, the corresponding graph has more connected components than the number of clusters, whereas when $\varepsilon$ becomes too large, the graph representation of the data does not represent the structure of the vector-valued data. Table \ref{Tab:AlgorithmPerformance} summarizes the success rate and the number of iterations required for each dataset, given an optimal choice of $\varepsilon$.

\begin{figure}
	\subfloat[ ]{
	\includegraphics[width=0.4\textwidth]{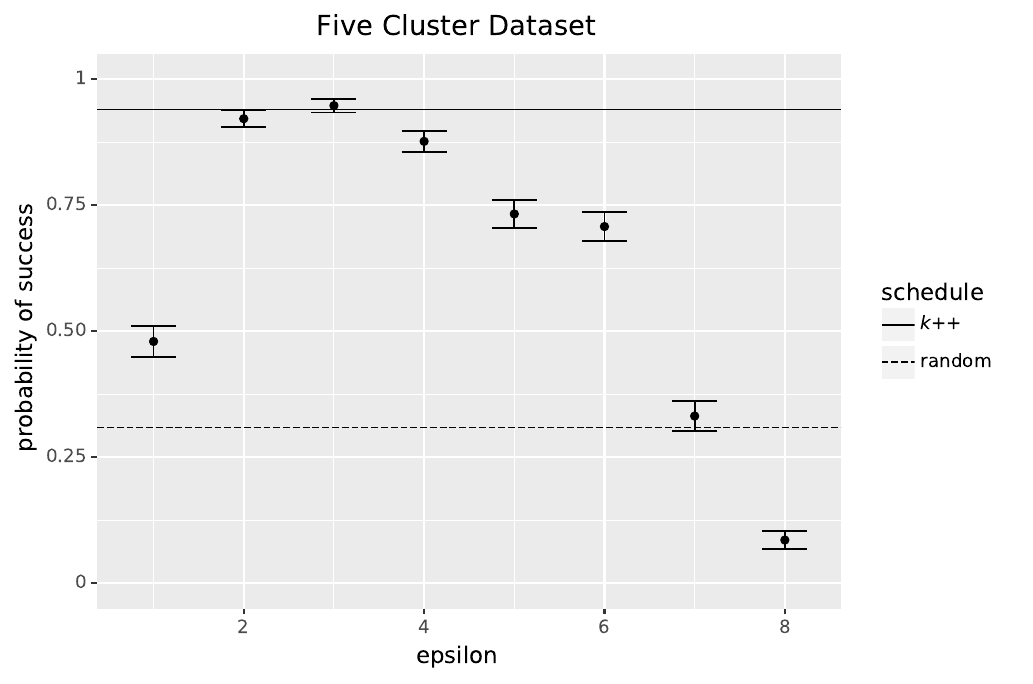}
	}

	\subfloat[ ]{
	\includegraphics[width=0.4\textwidth]{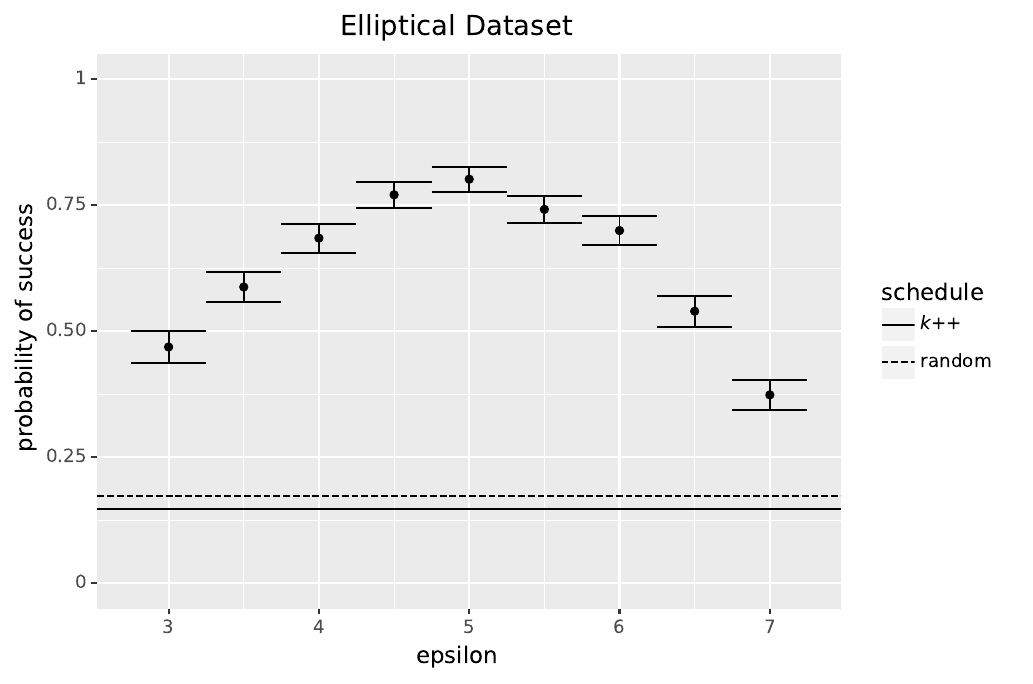}
	}

\caption{A plot of the probabilities of success of $q$-means for various choices of $\varepsilon$ in our construction of $\varepsilon$-balls for (a) the Five Cluster dataset and (b) the Elliptical dataset.}%
\label{Fig:Epsilons}
\end{figure}

\begin{table}
\begin{center}
\begin{tabular}{|c|c|c|c|}
\hline
Algorithm & Success & \# Iterations   & \# Iterations\\
 \ 	  & Rate    & when Successful & when Failed\\
\hline
\multicolumn{4}{|l|}{\textbf{Five Cluster}} \\
\hline
$q$-means & 94.8\% & 1.859$\pm$0.9333 & 4.827$\pm$2.854\\
\hline
$k$++ & 94.0\% & 2.057$\pm$0.625 & 5.319$\pm$2.702\\
\hline
random & 30.9\% & 3.356$\pm$1.038 & 6.233$\pm$2.818\\
\hline
\multicolumn{4}{|l|}{\textbf{Elliptical}}\\
\hline
$q$-means & 80.2\% & 1.559$\pm$0.833 & 9.747$\pm$1.922\\
\hline
$k$++ & 14.8\% & 1.864$\pm$1.092 & 6.484$\pm$2.288\\
\hline
random & 17.3\% & 2.165$\pm$1.017 & 6.826$\pm$2.282\\
\hline
\end{tabular}
\end{center}
\caption{Performance of each of the three initialization variants of $k$-means on the Five Cluster and Elliptical data sets with optimal choice of $\varepsilon$.}\label{Tab:AlgorithmPerformance}%
\end{table}

\subsection{$q$-nn\label{Subsec:qnn}}

The $l$-nearest neighbor algorithm is a simple but effective algorithm for deciding how to associate new data to existing clusters. Suppose we begin with a labeled dataset with each point labeled as being in one of $k$ clusters.  Given a new point $v$, our goal is to assign $v$ to one of the $k$ clusters. The $l$-nearest neighbor algorithm chooses which cluster to associate to $v$ by determining the clusters associated with the $l$ nearest data points.  The point $v$ is assigned to a cluster based on a majority vote of these $l$ points.\footnote{There are different methods for dealing with ties.  Usually though, they are broken arbitrarily.} In what follows, we develop a quantum initialization for the nearest neighbor algorithm, which we call $q$-nn. This initialization uses QCI to make an initial cluster estimation,  and then uses the nearest-neighbor algorithm to partition the rest of the data.


\begin{algorithm}[H]
    \caption{Returns cluster labels for data points}
    \algsetup{indent=2em}
 \begin{algorithmic}[1]
    
    \STATE \textbf{function $q$-nn} $(L, k, m)$\\
    \textbf{Input:} Laplacian $L$, number of clusters $k$, number of measurements for each cluster $m$\\
    \textbf{Output:} label data $\{u_n^1\},\{u_n^2\},\ldots,\{u_n^k\}$\\
    \begin{ALC@g}
    \STATE $u_1^1\Leftarrow$ vertex chosen uniformly at random\\
    \FOR{$i\in [k-1]$}
        \STATE $u_1^{i+1}\Leftarrow \mbox{QCI}\left(L,\{u_1^n\}_{n=1}^{i}\right)$
    \ENDFOR \newline\\
    \FOR{$i \in [m]$}
        \FOR{$j \in [k]$}
            \STATE $u_{i+1}^j \Leftarrow \mbox{QCI}\left(L,\{u_1^i\}_{i\neq j}\right)$
        \ENDFOR\\
    \ENDFOR
    \end{ALC@g}
    \RETURN $\{u_n^1\},\{u_n^2\},\ldots,\{u_n^k\}$
 \end{algorithmic}
 \label{alg:qnn}
\end{algorithm}

We begin with a set of vector data, a value $k$ for the number of clusters, and a value $m$ for the number of measurements we take for each cluster.  We first take a ``thinned'' subset the data and convert this to a graph as described in Section \ref{Sec:NumSim}.  However, we see, even using only a subset of the data provides good results after applying the nearest neighbor algorithm.  The $q$-nn initialization proceeds similarly to that of the $q$-means initialization.  The QCI algorithm is run $k$ times to obtain a representative for each of the $k$ clusters.  Then for the $j$-th cluster, we mark all but the $j$-th representative and run QCI $m$ times.  Once we have done this for each of the clusters, we use a majority vote to determine cluster membership.  Finally, we perform the $l$-nearest neighbor algorithm on any unlabeled data (or in our case the rest of the data set) to finish clustering the data.

The complexity of the quantum subroutine in $q$-nn is dominated by the sampling rate $m$. We need to sample the ground state enough times to obtain a good approximation for the distribution.  The expected value of the $L_1$ error between the empirical distribution and the true distribution can be bounded by $\sqrt{(N-1)/m}$ \cite{Han_2015}.  Therefore, for a given error tolerance, the sampling rate is linear in $N$, giving the quantum subroutine a runtime of $\mc{O}(N)$.  As with $q$-means, the classical cost of building the graph Laplacian dominates the complexity of the overall $q$-nn algorithm.

We compare $q$-nn to Laplacian spectral clustering, which also uses the graph Laplacian and requires an input for the expected number of clusters.  For a true comparison, we test the spectral clustering over subsets of the data and apply the $l$-nearest neighbor algorithm to partition the rest of the data. Since Laplacian spectral clustering works only over connected graphs, thinned subsets that returned disconnected graphs were not considered for spectral clustering.  We apply $q$-nn to two test sets.  The first is a set of connected graphs for comparison against spectral clustering; the second is a set of unrestricted graphs to see the general performance of $q$-nn.

We note that, since $k$-means and $k$++ do not require a graph Laplacian, it was important to factor the additional cost of constructing the graph Laplacian into the complexity of $q$-means when comparing it to $k$-means and $k$++. However, since $q$-nn and spectral clustering both require the graph Laplacian as an input, it is worth comparing the complexity of the two algorithms when the graph Laplacian is assumed to be already provided (e.g. in a graph clustering scenario, as opposed to data clustering). In this case, as already discussed, the cost of the $q$-nn algorithm is dominated by the sampling rate $m$, which has runtime $\mc{O}(N)$. The runtime for spectral clustering, however, is $\mathcal{O}(N^2)$, since it requires computing a subset of the eigenvectors of the the graph Laplacian. Thus, we obtain a quadratic improvement over spectral clustering.

\begin{figure}
	\subfloat[Three Cigars Dataset]{
	\includegraphics[width=0.3\textwidth]{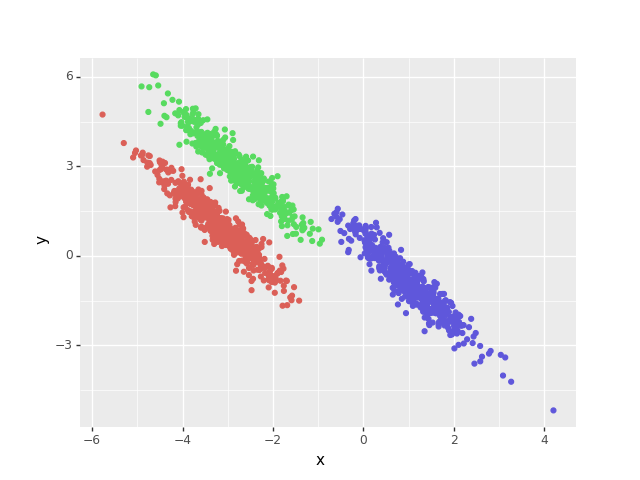}
	}

	\subfloat[Sun and Moon Dataset]{
	\includegraphics[width=0.3\textwidth]{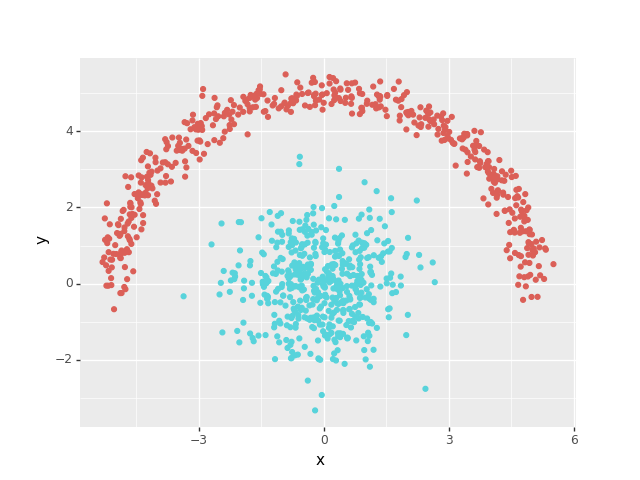}
	}
\caption{Images of the correctly clustered (a) Three Cigars dataset and the (b) Sun and Moon dataset used to test $q$-nn.}%
\label{Fig:qnn-datasets}
\end{figure}

We test $q$-nn on two datasets, shown in Figure \ref{Fig:qnn-datasets}.  The first dataset consists of data drawn from three anisotropic Gaussians (the \emph{Three Cigars}).  The second dataset is a  half circle and a circular Gaussian (the \emph{Sun and Moon}).  As a measure of performance, we use the {\textit{adjusted rand index}} between the cluster solution and the data's true labels.\footnote{The adjusted rand index is a measure of agreement between two partitions.  A score of 1 represents a perfect match between the two partitions while a score of 0 represents a random labeling.} The box plots in Figure \ref{Fig:qnn-boxplots} show the results of $q$-nn and Laplacian spectral clustering for various $\varepsilon$'s.  We see that $q$-nn, when restricted to connected graphs, has a similar performance to unrestricted $q$-nn. We also see that $q$-nn has a median performance similar to spectral clustering on both datasets, albeit with a larger variance.  We believe this variance is caused by the choice of marks, as these choices have a large impact on the distribution over the vertices.

\begin{figure}
	\subfloat[Three Cigars Dataset box plots]{
	\includegraphics[width=0.3\textwidth]{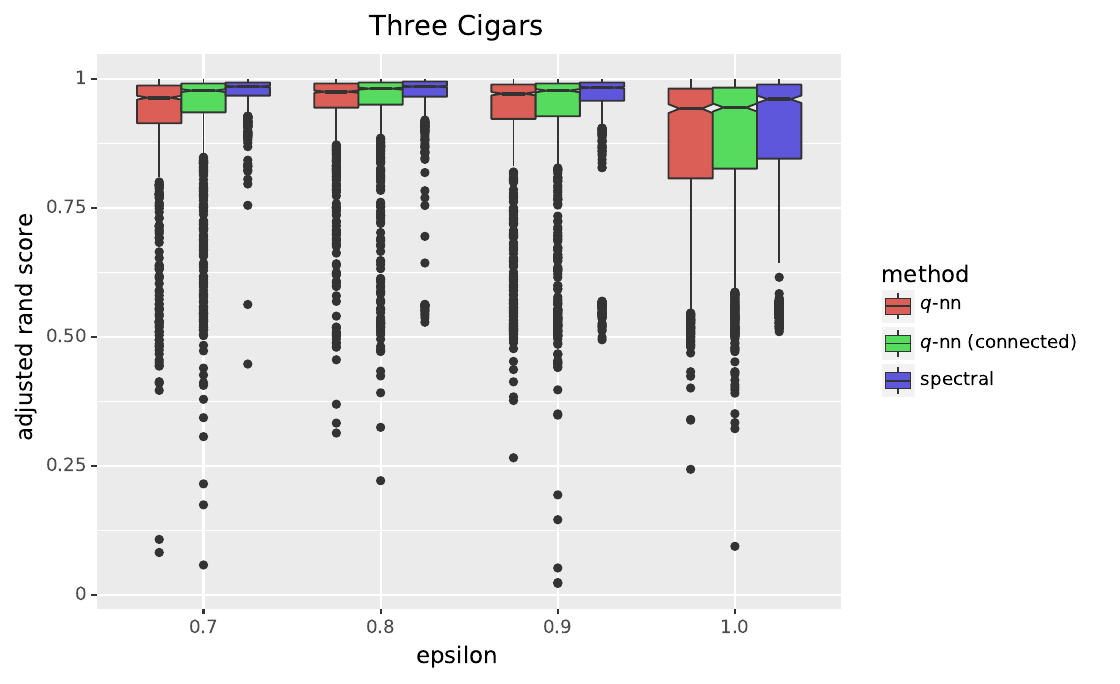}}

	\subfloat[Sun and Moon Dataset box plots]{
	\includegraphics[width=0.3\textwidth]{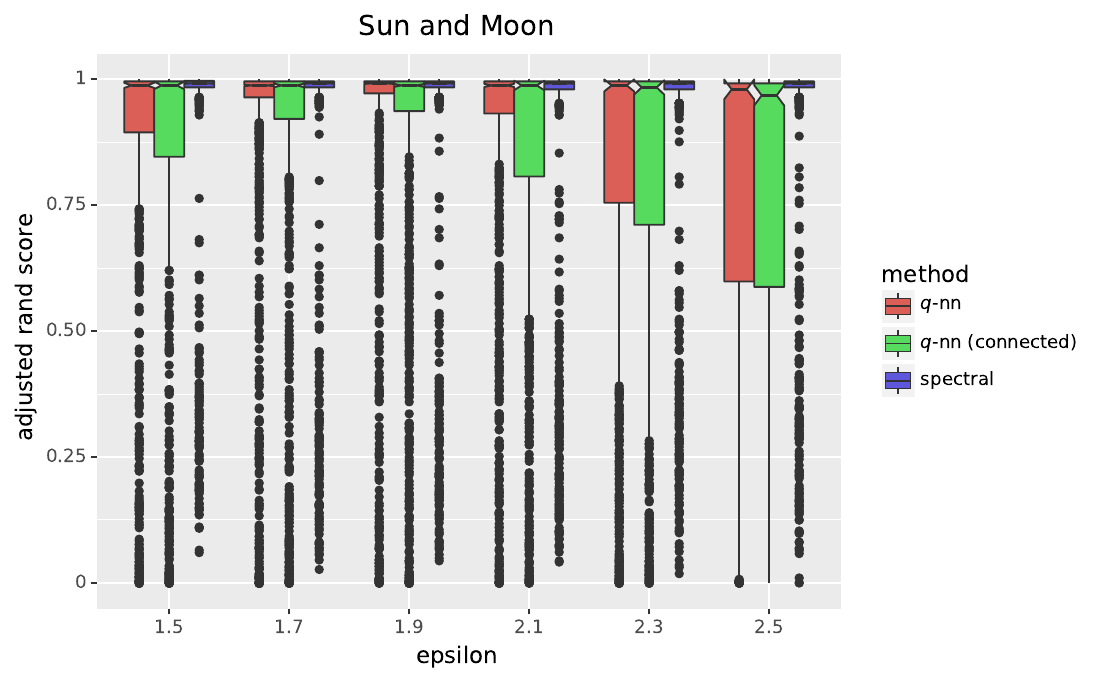}}
\caption{Box plots of $\varepsilon$ vs. adjusted rand scores (ars) for $q$-nn (red) and spectral clustering (blue) for (a) the Three Cigars dataset and (b) the Sun and Moon dataset, over various $\varepsilon$'s.}%
\label{Fig:qnn-boxplots}
\end{figure}

\section{Conclusion}
In conclusion, we have shown that the implicit structure of stoquastic Hamiltonians lends itself well to clustering applications. Taking advantage of this structure, we have developed a quantum cluster initializer and integrated it into new and existing clustering techniques to demonstrate an improvement in various performance parameters. 

As compared to $k$++, we find that our new $q$-means algorithm is applicable to more types of data.  Even as the underlying graph structure departs from the idealized case, $q$-means will choose seeds that produce good results after the $k$-means algorithm is applied.  While there is more computation cost up front in building the graph Laplacian, the quantum subroutine runs in constant time.  In addition, empirically, we see fewer iterations required to converge to a correct solution on our datasets.

As compared to spectral clustering, we find that our $q$-nn algorithm has a similar success rate for all datasets tested, albeit with a slightly larger variance. Unlike spectral clustering, however, our $q$-nn algorithm does not require the underlying graph to be connected; and in fact, the peformance does not seem to be affected by the restriction to connected graphs.  Moreover, $q$-nn gives a speedup in runtime over spectral clustering, when the data is presented as a graph.  When the data is vector-valued, the two algorithms have comparable runtime due to a bottleneck in creating the graph.

In an effort to demonstrate utility on near-term devices, we have assumed that the total number of qubits is small. Even though the number of qubits required is logarithmic in the size of the data ($N$ vertices are represented with $\log_2(N)$ qubits), we have implicitly assumed all-to-all connectivity between these qubits. This assumption is extremely hard to satisfy in practice. Practically, even after using clever embedding techniques, an adiabatic quantum computer may only have an extremely small number of fully-connected qubits. As such, in our simulations, we experimented with ``thinning'' the data for the quantum subroutine. Namely, we selected uniformly random subsets of various sizes from the full dataset from which to build the problem Hamiltonians, and use these solutions as the inputs into the classical subroutines. Our analysis shows empirically that even when the quantum subroutine was run over these smaller sample sizes, it still generated useful inputs for the classical routines. Consequently, we expect that these algorithms should be implementable on near-term quantum devices.

\begin{acknowledgments}
This project was funded through Naval Surface Warfare Center, Dahlgren Division (NSWCDD) Navy Innovative Science and Engineering (NISE) program. The authors are grateful for useful discussions and insights provided by Dave Marchette, the help from Matthew Crawford, and the rest of the members of the AM\&DA group. The authors would additionally like to thank an anonymous referee, whose comments and suggestions greatly improved the overall quality of this paper.
\end{acknowledgments}

\bibliography{NISQ2}

\appendix*\label{appendix}
\section{}

\begin{figure}
    \centering
    \includegraphics[width=3.5in]{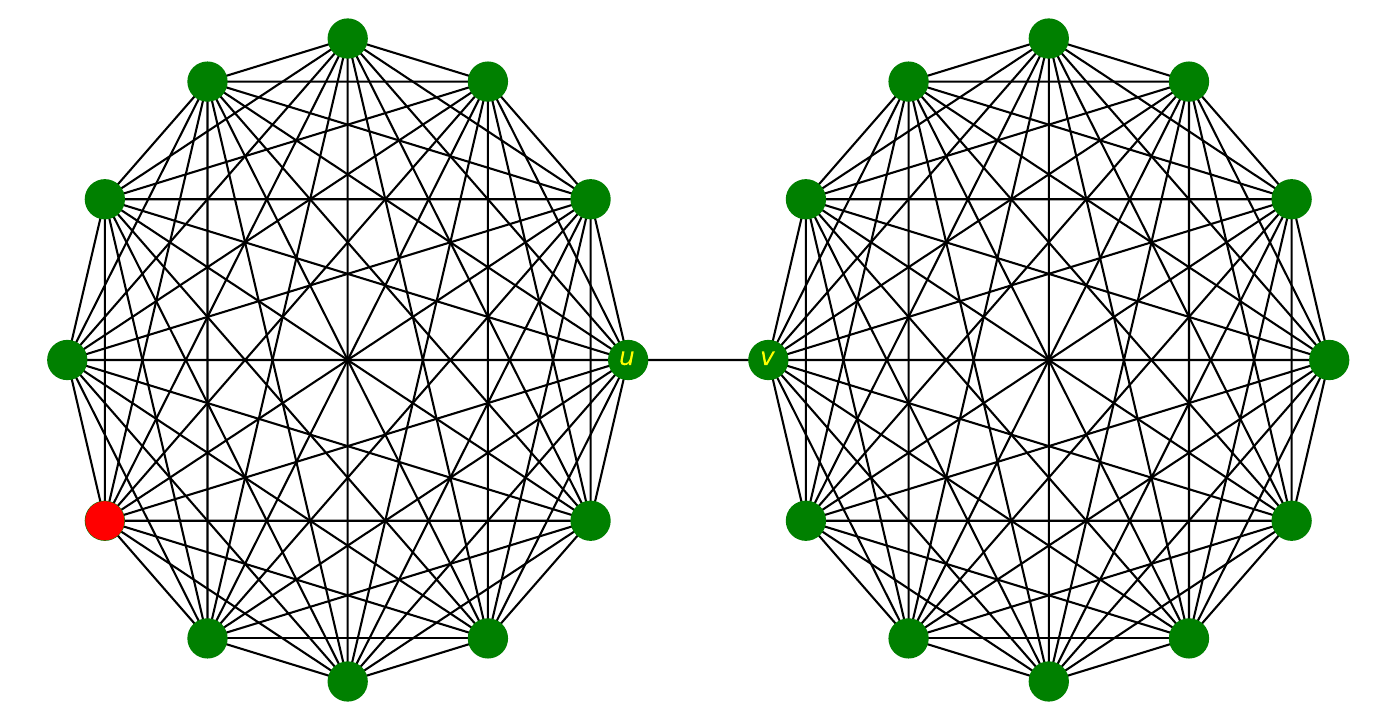}
    \caption{Two complete graphs connected by an edge. The red vertex is the ``marked'' vertex in the first graph.}
    \label{Fig:TwoCompleteGraphs}
\end{figure}

Consider two complete graphs, $G_1,G_2$, with $N$ vertices in each that are connected by a single edge.  Let $u \in G_1$ and $w \in G_2$ be the vertices connected by this single edge.  Assume further that a vertex $m \neq u$ is marked in graph $G_1$ (see Fig. \ref{Fig:TwoCompleteGraphs}).  We wish to examine the amplitudes of the ground state vector.  By symmetry, there are only four cases depending on the location of the vertices:  (1) $p \in G_1\backslash \{u\}$, (2) $p = u$, (3) $p = w$, and (4) $p \in G_2 \backslash \{w\}$. In what follows, $\lambda$ and $\phi$ represent the lowest eigenvalue and corresponding normalized eigenvector.
\begin{lemma}\label{Prop:TwoGraphs}
The following hold with the definitions from above:
\begin{itemize}
    \item[\emph{\textbf{(i)}}] $(2 - \lambda) \phi(p) = \phi(u)$,  $p \in G_1 \backslash \{u\}$
    \item[\emph{\textbf{(ii)}}] $ \left (N - \lambda - \frac{N-2}{N - \lambda} \right ) \phi(u) = \phi(w)$
    \item[\emph{\textbf{(iii)}}] $\frac{N - \lambda + \alpha  }{N-1}\phi(w) = \phi(v)$,  $v \in G_2 \backslash \{w\}$,  where $\alpha^{-1} = \left (N - \lambda - \frac{N-2}{N - \lambda} \right )$
    \item[\emph{\textbf{(iv)}}] $(1 - \lambda)\phi(q) = \phi(w)$, $q \in G_2 \backslash \{w\}$
\end{itemize}
\end{lemma}

\begin{proof}  Results (i) - (iv) all follow from considering expressions of the form
$$\sum_{y \sim x} \left ( \phi(x) - \phi(y) \right ) = \lambda \phi(x)$$
and invoking symmetry.
(i) follows from replacing $x$ with $p \in G_1 \backslash \{u\}$.  Expression (ii) is found by substituting $u$ for $x$ and applying (i).  (iii) is found similarly using $x = w$ and applying (ii). (iv) follows from setting $x = q \in G_2 \backslash \{w\}$ and applying (iii)
\end{proof}

Using the above lemma, we can show the relation between the relative amplitudes of the vertices from the four cases as well as their limiting behavior.

\begin{proposition}\label{Prop:Inequality}
With the definitions from above, $$\phi(p) < \phi(u) < \phi(w) < \phi(q)$$ where $p \in G_1 \backslash \{u\}$ and $q \in G_2 \backslash \{w\}$ are arbitrary.  As $N$ goes to infinity, $\phi(p)$ and $\phi(u)$ converge to 0 and $\phi(q)$ and $\phi(w)$ converge to $1/\sqrt{N}$.
\end{proposition}

\begin{proof}
  Using (i) from Lemma \ref{Prop:TwoGraphs} and the fact that $\lambda<1$, we have $\phi(p)<\phi(u)$.  Similarly, using (ii) from Lemma \ref{Prop:TwoGraphs}, $\phi(u)<\phi(w)$.  And finally, using (iv), $\phi(w)<\phi(q)$.
  
  From (ii) and (iii), as $N$ goes to infinity, $\phi(u)$ goes to 0 and $\phi(w)$ goes to $\phi(v)$.  Then $\phi(p)$ goes to 0 as well.  By symmetry $\phi(w)$ and $\phi(v)$ converge to $1/\sqrt{N}$.
\end{proof}

An immediate consequence of Proposition \ref{Prop:Inequality} is that, as $N$ goes to infinity, the spectrum associated with the final Hamiltonian associated with this graph converges to that of Grover's algorithm. Consequently, for large $N$, we may assume the optimal Grover schedule.

\end{document}